\documentclass[12pt]{article}
\usepackage[utf8]{inputenc}
\usepackage[margin=0.75in]{geometry}
\usepackage{booktabs, float}
\usepackage{tabularx}
\usepackage{hyperref}
\hypersetup{
	colorlinks=true,
	linkcolor=blue,
	citecolor=blue,
    urlcolor=blue
}
\usepackage{amsmath, amsfonts, amssymb, amsthm}
\usepackage{enumerate}

\newtheorem{theorem}{Theorem}

\usepackage{natbib}
\usepackage{algpseudocode}
\usepackage[ruled,vlined,linesnumbered]{algorithm2e}
\usepackage{appendix}
\usepackage{amssymb,amsmath,relsize}
\usepackage{graphicx} 
\usepackage{subfigure}
\usepackage{svg}
\usepackage{svg-extract}
\usepackage{rotating}
\usepackage{adjustbox}
\usepackage{csquotes}
\usepackage{mathrsfs}
\usepackage{isomath}
\usepackage{lscape}
\usepackage{multirow}
\usepackage{tikz}
\usetikzlibrary{arrows.meta, positioning}
\begin{document}
	
	\title{Optimum Multiple Sampling Plan Based on the Process Capability Index $C_{py}$ Under Type-II Hybrid Censoring}



	\author{Rajat Das$^1$, Tanmay Kayal$^2$, Yogesh Mani Tripathi$^1$, and Tanmay Sen$^3$\thanks{Corresponding author: tanmay.sen@isical.ac.in}}
	\date{}
	{\footnotesize  \maketitle \noindent {\it $^{1}$Department of Mathematics, Indian Institute of Technology Patna, Patna, 801106, Bihar, India}\\
    {\it $^2$School of Mathematical \& Statistical Sciences, Indian Institute of Technology Mandi, Kamand, 175075, Himachal Pradesh, India }\\
    {\it $^3$Statistical Quality Control and Operations Research Unit, Indian Statistical Institute, Kolkata, 700108, West Bengal, India}} 
	\begin{abstract}
		\noindent  

        This paper proposes a stage independent multiple sampling plan (SIMSP) to improve inspection efficiency by reducing the number of samples required at each sampling stage. Unlike conventional multiple sampling plans (MSP), the proposed SIMSP eliminates the dependence of each sampling stage on the outcome of the preceding inspection. The proposed approach is developed for non-repairable products sold under a pro-rata warranty policy based on the generalized process capability index $C_{py}$.
        The SIMSP is designed under a Type-II hybrid censoring scheme (Type-II HCS), which provides greater flexibility in controlling test time and failure information in life testing experiments. The asymptotic distribution of the process capability index estimate is used to compute the operating characteristic (OC) function, and the exact Fisher information matrix (FIM) is obtained for further statistical analysis. A constraint optimization problem is formulated to determine the optimal design by minimizing the total cost subject to the manufacturer's and consumer's risk.
        Numerical investigations are conducted to examine the effects of model parameters, warranty policy characteristics, and cost factors on the optimal solution. The results demonstrate that the proposed approach provides an economically efficient and feasible method for lot acceptance while satisfying the tolerable risks requirements.

    		
        ~~\\ 
		\noindent {\it Keywords:}~ Acceptance sampling plan; generalized process capability index; Chen distribution; Pro-rata warranty; Monte Carlo simulation
	\end{abstract}
	
\section{Introduction}\label{sec-1}
Acceptance sampling is a basic method of quality control in statistics for determining whether a production lot meets the required quality criteria. It offers an ideal compromise between not conducting an inspection at all and inspecting the entire production lot, which may be costly, impractical, or even destructive to the items. By inspecting a sample from the production lot, the manufacturer or buyer can strike a balance between costs and control over the risks to producers and consumers. Among the available acceptance sampling procedures, the single sampling plan (SSP) is the simplest and most widely adopted because of its straightforward implementation. Under an SSP, a single sample is drawn from the submitted lot, and the acceptance or rejection decision is made immediately based on a predetermined criterion. Its simplicity makes it attractive in practice; however, the entire decision relies on a single sample (see \citep{wang2025constructing}, \citep{wang2022integrated}, and \citep{wu2021acceptance}). Consequently, SSP often requires relatively large sample sizes to achieve specified producer's and consumer's risks, resulting in increased inspection costs.

The double sampling plan (DSP) is an extension of the SSP to increase its efficiency. In the DSP, a second sample is taken only when the information from the first sample is insufficient to accept or reject the lot. In this case, the number of samples needed usually decreases, since many samples of either very high or very low quality can be categorized based on the first sample. \citep{sommers1981two} initially provided the design plan parameters for the two-point variable double-sampling plans, in which the OC curve for the plan was then determined using an approximate method based on information collected from both sample selections. Recent work on DSP can be found in the studies \citep{sridevi2023determination}, \citep{liu2024efficient}, and many others. To make conventional double sampling plans easier to employ, \citep{arizono2020variable} proposed the stage independent double sampling plan (SIDSP), in which the first stage did not influence the second-stage decision. This process simplified the design of variable double-sampling plans since there was no need to compute conditional probability in the OC function.

To further reduce inspection effort, the multiple sampling plan (MSP) extends DSP by allowing several successive sampling stages. Since only a relatively small sample is inspected at each stage, MSP generally achieves a lower average sample number (ASN) than SSP and DSP, particularly when the submitted lot is either of very high or very low quality. For this reason, MSP has received considerable attention in acceptance sampling research. Despite its economic advantages, conventional MSP is mathematically complicated because the decision at each stage depends on the outcomes of all previous stages. Consequently, deriving the OC function becomes increasingly difficult as the number of sampling stages increases. To overcome the computational complexity of conventional MSP, \citep{wu2023stage} proposed the SIMSP as an extension of the SIDSP based on the process capability index $C_{pk}$. In the SIMSP, the inspection outcome at each sampling stage is assumed to be independent of previous stages. This relaxation substantially simplifies the derivation of the OC function while retaining the principal advantage of MSP, namely a reduction in the ASN. Furthermore, SIMSP encompasses SSP and DSP as special cases when the maximum allowable number of stages is $1$ and $2$, respectively.

With increasing competition in the marketplace, it is no longer enough to have products that provide satisfactory quality to the consumers. There is a growing trend among companies to implement warranty programs as a marketing tool to improve consumer satisfaction and enhance their competitiveness. In addition to guaranteeing product quality, the warranty program reduces some of the risks consumers may face from product failures after purchase. Thus, the cost of warranty programs is becoming an important part of production and quality management. Among various warranty policies, the general rebate warranty (GRW) has attracted considerable attention because it provides customers with a predetermined rebate when a product fails within the warranty period, rather than requiring full replacement or repair. Consequently, several reliability acceptance sampling plans have been developed for products marketed under GRW, where warranty cost is explicitly incorporated into the sampling plan design. Interested readers can find recent work on a warranty based acceptance sampling plan in the literature, including \citep{chakrabarty2020optimum}, \citep{das2025bayesian}, \citep{chakrabarty2021optimum}, and     \citep{wang2024modified}.

Since acceptance decisions for highly reliable products are often based on life testing experiments, the choice of an appropriate censoring scheme plays a crucial role in balancing experimental cost and testing time. Conventional Type-I censoring may terminate the experiment before a sufficient number of failures are observed, whereas Type-II censoring often requires an excessively long testing duration. Type-II HCS combines the advantages of both schemes, thereby providing greater flexibility and practical efficiency in life-testing experiments. Owing to these advantages, Type-II HCS has been widely adopted in reliability analysis and acceptance sampling; see \citep{bhattacharya2015computation}, \citep{sen2018statistical}, \citep{salah2021statistical}, and others.

Although SIMSP based on process capability indices have demonstrated superior inspection efficiency by reducing average sample sizes, their application has been largely limited to conventional process capability indices and has overlooked the influence of warranty policies. Moreover, existing warranty-based acceptance sampling plans rely predominantly on lifetime characteristics rather than process capability measures. Although numerous acceptance sampling plans have been developed based on classical process capability indices, such as $C_p, C_{pk}, C_{pc}$, etc., existing studies have primarily focused on these conventional indices and their variants. By comparison, integration of the generalized process capability index (GPCI) $C_{py}$, introduced in \citep{maiti2010generalizing}, for acceptance sampling plans has received little attention. Consequently, there remains a need to develop a sampling plan framework that leverages the greater flexibility and broader applicability of generalized capability indices. To the best of our knowledge, no existing study has proposed a sampling plan for products sold under a pro-rata warranty policy using the generalized process capability index $C_{py}$. This gap motivates the development of the proposed sampling plan, which simultaneously accounts for process capability, inspection, and warranty-related costs. Accordingly, this paper develops a generalized process capability index based, stage independent multiple sampling plan for products sold under a pro-rata warranty policy. The proposed plan determines the optimal design parameters by minimizing the expected total cost while satisfying the producer's and consumer's risk requirements. The performance of the proposed methodology is investigated through numerical studies and comparisons with existing sampling schemes.

The rest of the paper is structured as follows. In section \ref{sec-2}, we give a brief description of the generalized process capability index $C_{py}$ and provide a brief review of the baseline Chen distribution. In section \ref{sec-3}, the estimation procedure for $C_{py}$ is discussed, and the exact Fisher information matrix is derived. The operating procedure of the sampling plan, the development of the OC function under the asymptotic distribution of $\hat{C}_{py}$, and the optimization model to obtain the optimal parameters of the sampling plan under the warranty rebate policy are presented in section \ref{sec-4}. Numerical examples and comparative studies, along with tables of optimal parameters for the proposed plan, are provided in section \ref{sec-5}. The practical applicability of the proposed approach is demonstrated through a numerical example and a sensitivity analysis in Section \ref{sec-6}. Section \ref{sec-7} concludes the paper with a summary of the main findings and future research directions.

\section{The index \texorpdfstring{$C_{py}$}{Cp} and the Chen distribution}\label{sec-2}
This study considers the GPCI, denoted by $C_{py}$, which was originally introduced in \cite{maiti2010generalizing} as a generalization of traditional process capability measures. A key strength of this index is its versatility: it accommodates both continuous and discrete quality characteristics and applies to processes following either normal or non-normal distributions. Let $F(\cdot)$ represent the CDF of the process. The GPCI thus offers a comprehensive and adaptable framework for assessing process performance, particularly in scenarios involving asymmetric tolerance limits or non-normal behavior. The mathematical formulation of the index is given by
\begin{equation}
    \begin{aligned}
        C_{py} & =\frac{F(U)-F(L)}{F(UDL)-F(LDL)}=\frac{p}{p_0}
    \end{aligned}
\end{equation}
Here, the symbols $L$ and $U$ correspond to the lower and upper specification limits, whereas $LDL$ and $UDL$ denote the lower and upper desirable (or tolerance) limits, as commonly used by practitioners. The index $C_{py}$ possesses several notable features. It is applicable under both one-sided and two-sided specification settings. 

The process capability index $C_{py}$ has several desirable properties. When the process yield equals the desired yield ($p=p_0$; for a normal process), $C_{py}=1$. If $p>p_0$, then $C_{py}>1$, indicating a capable process, whereas $C_{py}<1$ for $p<p_0$. As $p \to 0$, $C_{py}\to 0$, making zero its lower bound. Thus, $C_{py}$ provides a direct and intuitive measure of process capability relative to the minimum acceptable yield $p_0$.
The index applies to both normal and non-normal distributions, continuous and discrete quality characteristics, and unilateral as well as bilateral specification limits. Owing to its direct relationship with process yield, $C_{py}$ is easy to interpret and implement in practice. Moreover, its estimator has relatively simple distributional properties, facilitating statistical inference. Some recent works on GPCI can be found in \cite{dey2018bootstrap}, \citep{kayal2026inference}, and \cite{kumar2022parametric}.

The Chen distribution (CD), introduced by \citep{chen2000new}, is a two-parameter lifetime distribution that is useful for fitting increasing or bathtub-shaped hazard rate functions. It is worth mentioning that, unlike other lifetime models, such as Exponential and Weibull distributions, whose hazard rate functions are usually constant or bathtub, but may not be suitable to produce a good bathtub shape of the failure rates. The Chen distribution appropriately reflects the three periods in a product's lifecycle: the early failure period, the useful life period, and the wear-out failure period. The Chen distribution has many desirable statistical properties, such as positive skewness, analytical forms of the cumulative distribution function and hazard rate function, and manageable maximum-likelihood estimation. Due to these appealing properties, the Chen distribution has found applications (see \citep{singh2026inference}, \citep{dey2026statistical}, and \citep{zhang2024parameter}) in reliability, survival analysis, maintenance, warranty, and quality control, and has inspired many generalizations of the lifetime distribution.
The cumulative distribution function (CDF) and the corresponding probability density function (PDF) of the Chen distribution are of the form
\begin{equation}
    F(x;\eta,\lambda) = 1-e^{\eta(1-e^{x^{\lambda}})},\quad x>0,\eta>0,\lambda>0,
\end{equation}
and 
\begin{equation}\label{pdf}
    f(x;\eta,\lambda) = \eta\lambda x^{\lambda-1}e^{x^\lambda+\eta(1-e^{x^{\lambda}})},\quad x>0,\eta>0,\lambda>0    
\end{equation}
respectively, where $\eta$ and $\lambda$ are unknown parameters. The parameter $\lambda$ controls the shape of the hazard rate function (HRF). Specifically, the hazard rate is increasing when $\lambda \geq 1$, whereas it exhibits a bathtub shape for $0 < \lambda <1$. Possible shapes of PDF and HRF are presented in the Figure \ref{pdf-hrf}. Now, under the assumption on the $CD(\eta,\lambda)$, the index $C_{py}$ is defined as follows
\begin{equation}
    C_{py} = \frac{e^{\eta\left(1-e^{L^{\lambda}}\right)}-e^{\eta\left(1-e^{U^{\lambda}}\right)}}{p_0}.
\end{equation}
\begin{figure}
    \centering
    \includegraphics[width=1\linewidth]{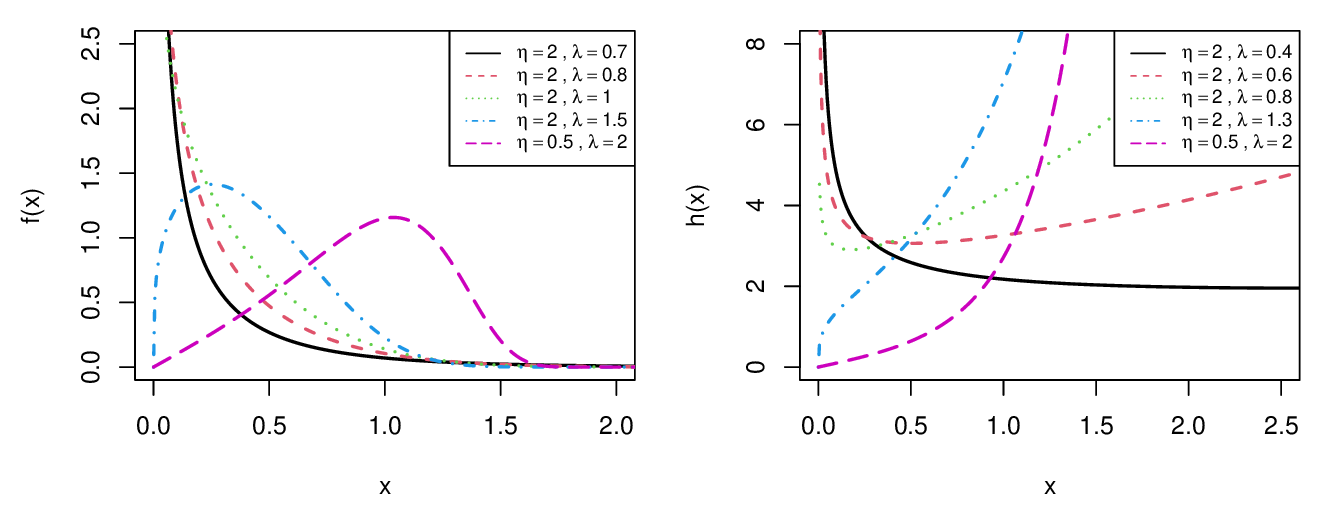}
    \caption{PDF and HRF plot for Chen distribution.}
    \label{pdf-hrf}
\end{figure}
\section{Estimation of \texorpdfstring{$C_{py}$}{Cpy}}\label{sec-3}
Let, $X=(X_{1:n},X_{2:n},\dots,X_{d:n})$ be the Type-II hybrid censored sample of size $d$ from a sample size $n$ from $CD(\eta,\lambda)$ with probability distribution function given \eqref{pdf}. Then the likelihood function can be obtained as
\begin{equation}
    L(\eta,\lambda\mid \underline{x}) = \frac{n}{(n-d)!}\prod_{i=1}^df(x_{i:n}\mid \eta,\lambda)\left[1-F(x_{\tau_0:n}\mid \eta,\lambda\right]^{n-d}.
\end{equation}
Here, $d$ denotes the total number of failures in the experiment up to time $\tau_0$. In other words, we can write
\begin{equation*}
    (d,\tau_0)= \begin{cases}
        (d_1,X_{0}) & \text{if} ~ X_{r:n} < X_{0}, \\
        (r,X_{r:n}) & \text{if} ~ X_{0} < X_{r:n},\\
        (n,X_{0})   & \text{if} ~ X_{n:n} < X_{0}
    \end{cases}
\end{equation*}
where $d_1$ denotes the number of failures that occur before time points $X_0$. Therefore, based on the observed data, the corresponding log-likelihood function $\ell(\eta,\lambda)$, ignoring the additive constant, is given by 
\begin{equation}\label{log-lik}
    \ell(\eta,\lambda) = d\ln{\eta} + d\ln{\lambda} + \lambda \sum_{i=1}^d\ln{x_{i:n}} + \sum_{i=1}^dx_{i:n}^\lambda + \eta\sum_{i=1}^d\left(1-e^{x_{i:n}^\lambda}\right) + (n-d)\eta\left(1-e^{\tau_{0}^\lambda}\right).
\end{equation}
\begin{theorem}\label{th-1}
    For $d\geq1$ and fixed $\lambda>0$, the MLE of $\eta$ exists uniquely and can be expressed as 
    \begin{equation}\label{eta-eq}
        \hat{\eta}(\lambda)=-\frac{d}{\sum_{i=1}^d\left(1-e^{x_{i:n}^\lambda}\right) + (n-d)\left(1-e^{\tau_0^\lambda}\right)}.
    \end{equation}
\end{theorem}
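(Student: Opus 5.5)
Fix $\lambda>0$ and regard the log-likelihood \eqref{log-lik} as a function of $\eta$ alone on $(0,\infty)$. Collecting the terms that involve $\eta$, it has the form
\begin{equation*}
g(\eta)=d\ln\eta+\eta A(\lambda)+C(\lambda),
\end{equation*}
where
\begin{equation*}
A(\lambda)=\sum_{i=1}^d\left(1-e^{x_{i:n}^\lambda}\right)+(n-d)\left(1-e^{\tau_0^\lambda}\right)
\end{equation*}
and $C(\lambda)$ does not depend on $\eta$. The plan is as follows. First establish the sign of $A(\lambda)$. Then solve the score equation. Finally show that its unique root is the global maximizer.

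For the sign, every observed failure time satisfies $x_{i:n}>0$, so $x_{i:n}^\lambda>0$ and hence $e^{x_{i:n}^\lambda}>1$. Each term $1-e^{x_{i:n}^\lambda}$ is therefore strictly negative. Similarly $\tau_0>0$, and $n-d\ge 0$, so the censoring term is nonpositive. Since $d\ge1$, at least one strictly negative summand appears, and so $A(\lambda)<0$. This is the only point where the hypothesis $d\ge1$ is needed. If $d=0$, the function $g$ would reduce to $\eta A(\lambda)$, which has no interior maximum. The case analysis for $(d,\tau_0)$ in the definition of Type-II HCS does not matter here, since in every case $\tau_0>0$.

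Next, differentiate to get $g'(\eta)=d/\eta+A(\lambda)$. Setting this equal to zero gives the single root $\hat\eta(\lambda)=-d/A(\lambda)$, which is exactly \eqref{eta-eq}. Because $A(\lambda)<0$, this root is strictly positive and so lies in the parameter space.

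For existence and uniqueness of the maximum, note that $g''(\eta)=-d/\eta^2<0$, so $g$ is strictly concave on $(0,\infty)$. A stationary point of a strictly concave function is its unique global maximizer. Alternatively, $g'$ is strictly decreasing, tends to $+\infty$ as $\eta\to0^+$, and tends to $A(\lambda)<0$ as $\eta\to\infty$. Equivalently, $g(\eta)\to-\infty$ at both ends of $(0,\infty)$. No step presents a genuine obstacle; the only care needed is establishing the strict negativity of $A(\lambda)$, which guarantees both that the root is admissible and that the supremum is attained.
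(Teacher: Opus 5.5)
Your proof is correct and follows essentially the paper's route: solve the score equation, then certify global maximality. The only difference is that you use strict concavity, $g''(\eta)=-d/\eta^2<0$, whereas the paper applies $\ln t\le t-1$ with $t=\eta/\hat\eta$ to get $\ell(\eta,\lambda)\le\ell(\hat\eta,\lambda)$ directly, which is the same concavity of $\ln$ in another form.

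Your explicit check that $A(\lambda)<0$ is a genuine improvement, because the paper tacitly needs $\hat\eta>0$ for $\ln(\eta/\hat\eta)$ to be defined and never verifies it. One small correction: $A(\lambda)<0$ would hold even for $d=0$, since the censoring term $n(1-e^{\tau_0^\lambda})$ is then negative. So the real role of $d\ge1$ is to make $\hat\eta$ strictly positive and $g$ strictly concave, not to fix the sign of $A(\lambda)$.
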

\begin{proof}
    See Appendix \hyperref[appendix-A]{A}.
\end{proof}
Substituting $\eta=\hat{\eta}(\lambda)$, in the log-likelihood Equation \eqref{log-lik}, we obtain the profile log-likelihood for $\lambda$ and it can be written as 
\begin{equation}
    \ell(\lambda)=d\ln{\lambda} + \lambda\sum_{i=1}^d\ln{x_{i:n}} + \sum_{i=1}^dx_{i:n}^\lambda - d\ln\left[-\sum_{i=1}^d\left(1-e^{x_{i:n}^\lambda}\right) - (n-d)\left(1-e^{\tau_0^\lambda}\right)\right].
\end{equation}
\begin{theorem}\label{th-2}
    The MLE $\hat{\lambda}$ of $\lambda$ exists and is the unique solution of the equation $\psi(\lambda)=0$ with
    \begin{equation}\label{lam-eq}
        \psi(\lambda)=\frac{d}{\lambda} + \sum_{i=1}^d\ln{x_{i:n}} + \sum_{i=1}^dx_{i:n}^\lambda\ln{x_{i:n}} + d\frac{\sum_{i=1}^d e^{x_{i:n}^\lambda}x_{i:n}^\lambda\ln{x_{i:n}}+(n-d)e^{\tau_0^\lambda}\tau_0^\lambda\ln{\tau_0}}{\sum_{i=1}^d\left(1-                                                   e^{x_{i:n}^\lambda}\right) + (n-d)\left(1- e^{\tau_0^\lambda}\right)}=0.
    \end{equation}
\end{theorem}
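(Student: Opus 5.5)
The plan is to study the profile score $\psi(\lambda)=\ell'(\lambda)$ directly. I will show that it is positive near $0$, that it eventually becomes negative, and that it crosses zero only once. I would abbreviate
\[
T(\lambda)=\sum_{i=1}^d\bigl(e^{x_{i:n}^\lambda}-1\bigr)+(n-d)\bigl(e^{\tau_0^\lambda}-1\bigr)>0 ,
\]
so that, by Theorem \ref{th-1}, $\hat\eta(\lambda)=d/T(\lambda)$ and
\[
\psi(\lambda)=\frac{d}{\lambda}+\sum_{i=1}^d\ln x_{i:n}+\sum_{i=1}^d x_{i:n}^\lambda\ln x_{i:n}-d\,\frac{T'(\lambda)}{T(\lambda)} .
\]
Note that $\tau_0\ge x_{d:n}$ in every case of the definition of $(d,\tau_0)$. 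I would also use the fact that the observed failure times are almost surely distinct.

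\textbf{Behaviour at $0$.} As $\lambda\to0^+$, every term other than $d/\lambda$ stays bounded. Indeed $T(\lambda)\to n(e-1)>0$ and $T'(\lambda)$ tends to a finite limit. Hence $\psi(\lambda)\to+\infty$.

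\textbf{Behaviour at $\infty$.} I would split according to whether $\tau_0>1$ or $\tau_0\le1$.
\begin{itemize}
\item If $\tau_0>1$, the ratio $T'/T$ is a weighted average of the quantities $y^\lambda\ln y$ over $y\in\{x_{i:n}\}\cup\{\tau_0\}$, with weights proportional to $e^{y^\lambda}$. These weights concentrate doubly exponentially on the largest value $\tau_0$. So $d\,T'/T\sim d\,\tau_0^\lambda\ln\tau_0$. On the other hand, $\sum_i x_{i:n}^\lambda\ln x_{i:n}$ is at most $\tau_0^\lambda\ln\tau_0$ plus terms of smaller exponential order. When $d\ge2$, or when $n>d$, the negative term therefore dominates and $\psi\to-\infty$.
\item If $\tau_0\le1$, I would use $e^{y^\lambda}-1\sim y^\lambda$. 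This gives $T'/T\to\ln\tau_0$ (the largest-point weighting again), and the remaining terms give $\lim\psi=\sum_i\ln x_{i:n}-d\ln\tau_0$. This limit is strictly negative unless all $x_{i:n}$ equal $\tau_0$, which happens with probability zero.
\end{itemize}
In both cases the intermediate value theorem yields a root, which is the existence part.

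\textbf{Uniqueness.} I would first try to show $\psi'(\lambda)<0$, i.e.\ that the profile log-likelihood is strictly concave. The derivative is
\[
\psi'(\lambda)=-\frac{d}{\lambda^2}+\sum_i x_{i:n}^\lambda(\ln x_{i:n})^2-d\,\frac{T''T-(T')^2}{T^2}.
\]
Here $T''T-(T')^2$ can be bounded below by Cauchy--Schwarz. Writing $T''=\sum w_y\bigl(y^\lambda(\ln y)^2+y^{2\lambda}(\ln y)^2\bigr)$ with $w_y=e^{y^\lambda}$, the part $\sum w_y y^{2\lambda}(\ln y)^2\cdot T$ should dominate $d\sum x^\lambda(\ln x)^2\,T$, using $e^{y^\lambda}y^\lambda\ge e^{y^\lambda}-1$.

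This step is the main obstacle. The positive term $\sum x^\lambda(\ln x)^2$ has no obvious sign control, and global concavity may fail for small samples. If it does fail, my fallback is the weaker claim: at any zero of $\psi$, substituting the root relation into $\psi'$ gives $\psi'<0$. Then every critical point is a strict local maximum, and a continuous function on $(0,\infty)$ with these limits cannot have two such points without an intervening minimum. This still delivers uniqueness.
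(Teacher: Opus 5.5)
Your two limit computations follow the same line as the paper's proof, and you are more careful at $\infty$. The uniqueness step, however, is the real content of the theorem, and it is left unproved. The Cauchy--Schwarz sketch cannot work as stated. For observed failures with $x_{i:n}<1$ one has $x_{i:n}^{2\lambda}\ll x_{i:n}^{\lambda}$, so their part of $\sum_i x_{i:n}^\lambda(\ln x_{i:n})^2$ is in general not controlled by $T''T-(T')^2$. It has to be absorbed by the $-d/\lambda^2$ term, which your sketch never uses. The fallback (``substituting the root relation into $\psi'$ gives $\psi'<0$'') is an assertion, not a computation.

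The paper does no better here: it simply declares $\psi$ decreasing. The gap can be closed, because global concavity does hold. Put $y_j=x_{j:n}$ for $j\le d$, $y_j=\tau_0$ for $d<j\le n$, $s_j=\lambda\ln y_j$ and $g(t)=\ln(e^{e^t}-1)$. Then $T=\sum_{j=1}^n e^{g(s_j)}$; set $w_j=(e^{y_j^\lambda}-1)/T$. We have $(\ln T)''=\sum_j w_j(\ln y_j)^2g''(s_j)+\operatorname{Var}_w\{(\ln y_j)g'(s_j)\}\ge\lambda^{-2}\sum_j w_jh(s_j)$ with $h(s)=s^2g''(s)$, so
\[
\lambda^2\psi'(\lambda)\le -d+\sum_{i=1}^d e^{s_i}s_i^2-d\sum_{j=1}^n w_jh(s_j).
\]
One computes $g''(s)=v\{1-(1+v)e^{-v}\}/(1-e^{-v})^2$ and $e^s-g''(s)=ve^{-v}(v-1+e^{-v})/(1-e^{-v})^2$, with $v=e^s$. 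From these, $g''$ is positive and increasing. A one-variable check gives $s^2\{e^s-g''(s)\}\le 0.45$ for $s\ge0$, and trivially $s^2e^s\le 4e^{-2}$ for $s<0$. Hence $e^s s^2\le h^+(s)+4e^{-2}$, where $h^+(s)=(\max\{s,0\})^2g''(s)$ is nondecreasing and bounded by $h$. Finally, $\frac1d\sum_{i\le d}h^+(s_i)\le\sum_j w_jh^+(s_j)$. This holds because the weights increase with $s$ (Chebyshev's sum inequality on the observed points) and the $n-d$ extra points sit at the maximum $\tau_0$. Combining, $\psi'(\lambda)\le-(1-4e^{-2})\,d/\lambda^2<0$ for every $\lambda>0$, so no fallback is needed.

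There is also an error in your analysis at $\infty$. The event that all $x_{i:n}$ equal $\tau_0$ does not have probability zero: if $r=1$ and $X_{1:n}>X_0$, then $d=1$ and $\tau_0=x_{1:n}$. In that case all $y_j$ coincide and the multiplicity $n-d$ cancels in $T'/T$, so your clause ``or when $n>d$'' in the case $\tau_0>1$ is also wrong. Here $\psi(\lambda)=\lambda^{-1}+a\{1-u/(e^u-1)\}$ with $a=\ln\tau_0$ and $u=\tau_0^\lambda$. This is positive for all $\lambda$: it is obvious if $a\ge0$, and if $a<0$ use $1-u/(e^u-1)\le u/2$ and $u|\ln u|\le e^{-1}$. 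So $\psi$ has no zero and the MLE does not exist. The correct condition for $\lim_{\lambda\to\infty}\psi<0$ is $d\ge2$ or $x_{d:n}<\tau_0$. The theorem, in the paper as well, needs this as a hypothesis.
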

\begin{proof}
    See Appendix \hyperref[appendix-B]{B}.
\end{proof}
From Theorem \ref{th-2}, we see that the closed-form solution of the MLE $\hat{\lambda}$ does not exist from the equation $\psi(\lambda)=0$ in Equation \eqref{lam-eq}. Therefore, an iterative approach, Algorithm \ref{algo-1}, is proposed to compute the MLE of $\lambda$ numerically. In addition, the MLE $\hat{\eta}$ of parameter $\eta$ could be further obtained from Theorem \ref{th-1} as 
    \begin{equation}
        \hat{\eta}=-\frac{d}{\sum_{i=1}^d\left(1-e^{x_{i:n}^{\hat{\lambda}}}\right) + (n-d)\left(1-e^{\tau_0^{\hat{\lambda}}}\right)}.
    \end{equation}
Therefore the MLE $\hat{C}_{py}$ of $C_{py}$ can then be obtained by
\begin{equation}
    \hat{C}_{py} = \frac{e^{\hat{\eta}\left(1-e^{L^{\hat{\lambda}}}\right)}-e^{\hat{\eta}\left(1-e^{U^{\hat{\lambda}}}\right)}}{p_0}.
\end{equation}
\begin{algorithm}[H]
\DontPrintSemicolon
\caption{Iterative estimation of $\hat{\lambda}$}
\label{algo-1}
\SetAlgoLined
\KwIn{Initial value $\lambda^{(0)}$ and tolerance $\varepsilon$}
\KwOut{$\hat{\lambda}$}
Set $l \gets 0$\\
\Repeat{$\big|\lambda^{(l+1)} - \lambda^{(l)}\big| < \varepsilon$}{
    Compute
   {\large$\psi_0(\lambda^{(l)}) = -\frac{d}{\sum_{i=1}^d\ln{x_{i:n}} + \sum_{i=1}^dx_{i:n}^\lambda\ln{x_{i:n}} + d\frac{\sum_{i=1}^de^{x_{i:n}^\lambda}x_{i:n}^\lambda\ln{x_{i:n}}+(n-d)e^{\tau_0^\lambda}\tau_0^\lambda\ln{\tau_0}}{\sum_{i=1}^d\left(1-e^{x_{i:n}^\lambda}\right) + (n-d)\left(1-e^{\tau_0^\lambda}\right)}}$}\\
    Update $\lambda^{(l+1)} \gets \psi_0(\lambda^{(l)})$\\
    $l \gets l + 1$\\
}
\Return {$\hat{\lambda} \gets \lambda^{(l)}$\\}
\end{algorithm}

Next, we compute the exact FIM under the Type-II HCS, which we later use to obtain the optimal sampling plan. The expected Fisher information is given by (see \cite{park2009simple}),
\begin{equation*}
    \mathcal{V}(\theta) = \mathcal{I}_{X_0}(\theta)+\mathcal{I}_{1,\dots,r}(\theta)-\mathcal{I}(\theta),
\end{equation*}
where
\begin{equation*}
    \mathcal{I}_{X_0}(\theta)=n\int_0^{X_0}\left\langle\frac{\partial}{\partial\theta}\ln{h_X(x)}\right\rangle f_X(x)dx, 
\end{equation*}
\begin{equation*}
    \mathcal{I}_{1,\dots,r}(\theta)=\int_0^\infty\left\langle\frac{\partial}{\partial\theta}\ln{h_X(x)}\right\rangle \sum_{i=1}^rf_{i:n}(x)dx, 
\end{equation*}
and 
\begin{equation*}
    \mathcal{I}(\theta)=\int_0^{X_0}\left\langle\frac{\partial}{\partial\theta}\ln{h_X(x)}\right\rangle\sum_{i=1}^rf_{i:n}(x)dx,
\end{equation*}
with $h_X(x)$ and $f_{i:n}(x)$ are representing the hazard function of $X$ and pdf of $X_{i:n}$ respectively and $\langle A \rangle$ denotes the matrix $A\cdot A^{\top}$, where $A^{\top}$ is the transpose of the matrix $A$. For our model, $f_{i:n}(x)$ has the form as
\begin{equation*}
    f_{i:n}(x)=i\binom{n}{i}\eta\lambda x^{\lambda-1}e^{x^\lambda+\eta(n-i+1)\left(1-e^{x^\lambda}\right)}\left(1-e^{\eta\left(1-e^{x^\lambda}\right)}\right)^{i-1}.
\end{equation*}
The derivation of the expected Fisher information matrix is presented in the Appendix \hyperref[appendix-C]{C}.
\citep{park2009simple} has also provided the simplified form of the expected number of failures and the expected failure time as
\begin{equation}
    E[\mathcal{D}]=nF_X(X_0)+r-\sum_{i=1}^rF_{i:n}(X_0),
\end{equation}
and
\begin{equation}
    E[\tau]=X_0 + E[X_{r:n-1}] - \int_0^{X_0}\left(1-F_{r:n-1}(t)\right)dt
\end{equation}
respectively, where $F_{i:n}(x)$ represents the CDF of $X_{i:n}$ with $\mathcal{D}$ and $\tau$ are the realization of random variables $d$ and $\tau_0$ respectively.
\section{{The proposed plan}}\label{sec-4}
\subsection{Design of sampling plans}
In a procurement agreement, the manufacturer and the buyer jointly determine the desired quality standards together with the corresponding producer's and consumer's risks. An acceptance sampling procedure is subsequently developed to meet these predetermined specifications. A widely adopted design criterion is to select the sampling plan such that its OC curve intersects two predetermined quality points, namely the acceptable quality level (AQL) and the limiting quality level (LQL). Specifically, when the submitted lot has a quality level of $C_{py}=C_{AQL}$ (good quality), the probability of accepting the lot should be at least $1-\alpha$, where $\alpha$ is the producer's risk. Conversely, when the lot quality is $C_{py}=C_{LQL}$ (poor quality), the probability of acceptance should not exceed $\beta$, where $\beta$ is the consumer's risk.

Figure \ref{simsp-flowchart} illustrates the flowchart of the proposed SIMSP. The operating procedure is summarized as follows:
\begin{enumerate}[Step 1.]
    \item Specify the contractual design parameters, namely the producer's risk $(\alpha)$, the consumer's risk $(\beta)$, the acceptable quality level $(C_{AQL})$, the limiting quality level $(C_{LQL})$ expressed through $C_{py}$, the upper limit on the number of sampling stages $(m)$, and the corresponding unit costs associated with the life-testing procedure. In addition, let $u$ represent the cumulative number of sampling stages, and set $u=1$.
    \item At the $u$th stage, draw a sample of size $n$ from the submitted lot, conduct the life test with failure number $r$ and censoring time $X_0$, and calculate the estimate $\hat{C}_{py}$.
    \item Check whether $u<m$. If the condition is satisfied, execute Step 4; otherwise, continue with Step 5.
    \item Compare $\hat{C}_{py}$ with the two decision limits, i.e. acceptance $(k_a)$ and rejection $(k_r)$. Accept the lot if $\hat{C}_{py}\geq k_a$, reject the lot if $\hat{C}_{py} < k_r$, and continue to another sampling stage if $k_r\leq\hat{C}_{py}<k_a$. In this case, set $u=u+1$ and return to Step 2.
    \item At the final stage, accept the lot if $\hat{C}_{py}\geq k_a$; otherwise, reject the lot.
\end{enumerate}

Following the operating procedure presented above, the probability functions necessary for deriving the OC function of the proposed sampling plan are defined as follows.
 Under some regularity conditions (see \citep{lehmann1998theory}) the statistic $\hat{C}_{py}$ is approximately normal with $E[\hat{C}_{py}]=C_{py}$ and $Var[\hat{C}_{py}]=\Delta_1^2\mathcal{V}_{11}+2\Delta_1\Delta_2\mathcal{V}_{12}+\Delta_2^2\mathcal{V}_{22}$, where $(\Delta_1,\Delta_2)=\left(\frac{\partial \hat{C}_{py}}{\partial\eta},\frac{\partial \hat{C}_{py}}{\partial\lambda}\right)$, and $\mathcal{V}_{11},\mathcal{V}_{12},\mathcal{V}_{22}$ are the elements of the matrix $\mathcal{V}^{-1}(\theta)$. Note that the variance $Var[\hat{C}_{py}]$ of $\hat{C}_{py}$ is derived using the delta method. Therefore, for a submitted lot with quality level $C_{py}=c$, the probabilities of acceptance, rejection, and resampling at the $u$th sampling stage, denoted by $P_a(c), P_r(c)$, and $P_S(c)$, respectively, are defined as follows:

\begin{align}
    P_a(c) = & P(\hat{C}_{py}\geq k_a\mid C_{py} = c)\nonumber\\
           = & 1 - \Phi\left(\frac{k_a-c}{\sqrt{\Delta_1^2\mathcal{V}_{11} +2\Delta_1\Delta_2\mathcal{V}_{12}+\Delta_2^2\mathcal{V}_{22}}}\right), \\
    P_r(c) = & P(\hat{C}_{py}\leq k_r\mid C_{py} = c)\nonumber\\
           = & \Phi\left(\frac{k_r-c}{\sqrt{\Delta_1^2\mathcal{V}_{11} +2\Delta_1\Delta_2\mathcal{V}_{12}+\Delta_2^2\mathcal{V}_{22}}}\right),\\
    P_S(c) = & P(k_r \leq \hat{C}_{py} \leq k_a\mid C_{py} = c) \nonumber\\
           = & \Phi\left(\frac{k_a-c}{\sqrt{\Delta_1^2\mathcal{V}_{11} +2\Delta_1\Delta_2\mathcal{V}_{12}+\Delta_2^2\mathcal{V}_{22}}}\right) - \Phi\left(\frac{k_r-c}{\sqrt{\Delta_1^2\mathcal{V}_{11} +2\Delta_1\Delta_2\mathcal{V}_{12}+\Delta_2^2\mathcal{V}_{22}}}\right),
\end{align}
where $\Phi(\cdot)$ denotes the CDF of the standard normal distribution. Therefore, the overall probability of accepting a submitted lot with quality level $C_{py}=c$, referred to as the OC function and denoted by $\pi_A(c)$, is given by:
\begin{equation}
    \begin{aligned}
        \pi_A(c) = & P_a(c)+[P_S(c)\times P_a(c)] + \dots + [\{P_S(c)\}^{m-1}\times P_a(c)]\\
        = & \sum_{u=1}^m [\{P_S(c)\}^{u-1}\times P_a(c)]\\
        = & \frac{P_a(c)[1-\{P_S(c)\}^m]}{1-P_S(c)}.
    \end{aligned}
\end{equation}
As a special case, when $m=1$, the proposed sampling plan is equivalent to the $C_{
py}$-based the SSP, and its OC function reduces accordingly.

According to the proposed operating procedure, a submitted lot can be resampled at most $(m-1)$ times before a final decision is made. Therefore, the average sample number (ASN), defined as the expected number of sample units required to reach a lot acceptance or rejection decision, is an important measure of the plan's performance. For a submitted lot with quality level $C_{py}=c$, the ASN of the proposed sampling plan is given by:
\begin{equation}
    \begin{aligned}
        \text{ASN}(c) = & n+[n\times P_S(c)] + [n\times \{P_S(c)\}^2] + \dots + [n\times \{P_S(c)\}^{m-1}]\\
        = & \sum_{u=1}^m[n\times \{P_S(c)\}^{u-1} = \frac{n[1- \{P_S(c)\}^m]}{1-P_S(c)}.
    \end{aligned}
\end{equation}

Under Type-II HCS, the number of observed failures is the primary focus rather than the total sample size $n$. As the number of failures is random, its expected value is considered. Consequently, the ASN could be adapted to the situation of the Type-II hybrid censored sample by introducing the average failure numbers (AFN) as follows:
\begin{equation}
    \text{AFN}(c)=\frac{E[\mathcal{D}][1- \{P_S(c)\}^m]}{1-P_S(c)}.
\end{equation}

\begin{figure}[H]
    \centering
    \includegraphics[width=0.6\linewidth]{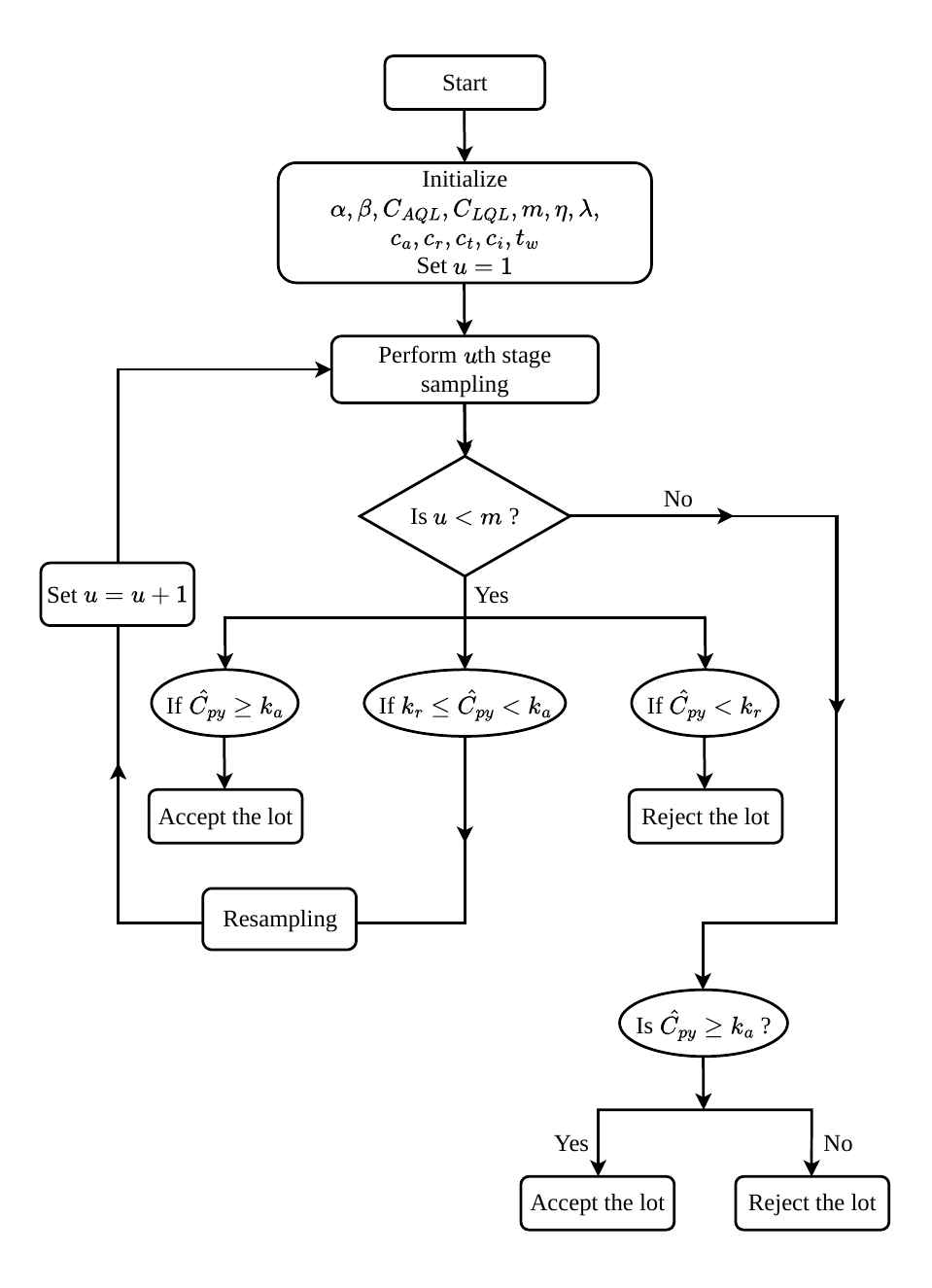}
    \caption{Flowchart of SIMSP}
    \label{simsp-flowchart}
\end{figure}

To satisfy the required quality levels $(C_{AQL}, C_{LQL})$ and the specified producer's and consumer's risks $(\alpha,\beta)$ while reducing inspection cost, the AFN is minimized. The corresponding optimization model is given by:
\begin{equation}\label{obj-1}
   \begin{aligned}
        \underset{(n,k_a,k_r)}{\text{Minimize}} & \quad  \frac{1}{2}E[\mathcal{D}]\left( \frac{[1- \{P_S(c_{AQL})\}^m]}{1-P_S(c_{AQL})} +  \frac{[1- \{P_S(c_{LQL})\}^m]}{1-P_S(c_{LQL})}\right) \\[10pt]
    \text{Subject to } & \quad \pi_A(c_{AQL}) = \frac{P_a(c_{AQL})[1-\{P_S(c_{AQL})\}^m]}{1-P_S(c_{AQL})} \geq                     1-\alpha\\
                       & \quad \pi_A(c_{LQL}) = \frac{P_a(c_{LQL})[1-\{P_S(c_{LQL})\}^m]}{1-P_S(c_{LQL})} \leq \beta\\
                       & \quad n\geq 2, ~ k_r< k_a.
   \end{aligned}
\end{equation}

The optimization problem in \eqref{obj-1} is intended to determine a feasible sampling plan that satisfies the prescribed producer's and consumer's risk requirements. In this section, the degree of censoring $(q)$ and the censoring time $(X_0)$ are treated as fixed design parameters, and the corresponding values of $n,k_a$, and $k_r$ can be obtained accordingly. However, this formulation does not account for the economic impact of time and cost. Therefore, we have extended the framework by incorporating time and cost into the objective function and formulated a new optimization problem that minimizes the expected total cost while jointly determining the optimal values of $n, r, X_0, k_a$, and $k_r$.

\subsection{Optimal sampling plans}
In many real industrial applications, especially in reliability and life-testing experiments, it is often impractical or time-consuming to observe all failure times. Instead, experiments are commonly conducted under censoring schemes, in which the test terminates after a predetermined number of failures have been observed. Under such conditions, the information contained in the censored sample must be incorporated into the design of the sampling plan. Ignoring censoring may lead to inefficient or unrealistic inspection strategies.

Another limitation of existing SIMSP based designs is that they primarily focus on minimizing the average number of failures as a single optimization objective. In practical decision-making environments, however, sampling plan design is rarely driven by a lone criterion. A more realistic formulation must account for several competing cost components, such as warranty loss, inspection expense, test duration cost, and the economic impact of accepting or rejecting a lot. Ignoring these factors may lead to plans that are statistically efficient but economically less desirable. Motivated by this observation, we replace the AFN based objective with a comprehensive cost function that reflects the operational burden of the inspection process. This allows the proposed design to balance statistical performance with economic efficiency, making the resulting plan more suitable for real industrial applications.

\subsubsection{Determining the cost function}

Previous studies identify four primary cost components associated with, or influenced by, a life testing scheme. These include: (i) the expense incurred when a lot is accepted, (ii) the expense resulting from rejecting a lot, (iii) the cost attributable to the duration of the test, and (iv) the inspection-related cost. Accordingly, the optimal life test plan is obtained by minimizing the overall cost function, denoted by $TC(n,r, X_0,k_a,k_r)$, which combines each of these cost elements into a single objective function.



In many practical situations, products are marketed under a pro-rata warranty policy. Under such a policy, the manufacturer’s warranty expenditure is directly influenced by the lot acceptance decision. Once a lot is accepted and released to market, product failures occurring within the warranty period may result in repair, replacement, compensation, servicing, and after-sales support costs. Consequently, acceptance of a comparatively low-reliability lot may result in substantial warranty-related expenses. Therefore, in the economic design of life test sampling plans, the expected warranty cost may be regarded as an important component of the lot acceptance cost.

\paragraph{Cost function:} Among various warranty policies, the pro-rata warranty policy \citep{blischke1992product, sen2022determination, chakrabarty2021optimum, chakrabarty2020optimum} is widely adopted in practice. Under this policy, the manufacturer's compensation decreases with the product's age at the time of failure. Hence, products that fail earlier in the warranty period receive higher compensation, whereas those that fail closer to the warranty expiration receive comparatively lower compensation. The mathematical formulation of the pro-rata warranty policy is expressed as follows:

\begin{equation}
    c_a^*=\begin{cases}
        c_a\left(1-\frac{t}{t_w}\right) & 0 \le t \leq t_w\\
        0 & t > t_w,
    \end{cases}
\end{equation}
where $t_w$ is the $w$th quantile of the product and $c_a$ is the cost of replacement. So, if the failure time is between $[0, t_w]$, the reimbursement under the pro-rata policy is determined by the remaining portion of the warranty period, which decreases over time. 

Hence, the expected warranty expenditure for each unit is obtained as
\begin{equation}
    w(\theta)=c_a\int_0^{t_w}\left(1-\frac{t}{t_w}\right)f_T(t)\,dt.
\end{equation}

Therefore, when $n$ units are selected from a lot of size $N$ for testing, the expected cost associated with warranty claims upon acceptance can be expressed as
\begin{equation}
    C_w=(N-n)w(\theta)\left(\frac{P_a(c)[1-\{P_S(c)\}^m]}{1-P_S(c)}\right).
\end{equation}
Note that the rejection cost is usually defined as the cost associated with units that are not tested. Let $c_r$ denote the per-unit cost associated with items that are excluded from testing. Then, the expected cost corresponding to the rejection of a lot can be expressed as
\begin{equation}
    C_r=(N-n)c_r\left(1-\frac{P_a(c)[1-\{P_S(c)\}^m]}{1-P_S(c)}\right).
\end{equation}
Now, if $c_t$ and $c_i$ are the cost per unit and the unit cost of inspection, the expected time consumption and the expected cost of inspection for failures are $ C_t = c_tE [\tau]$ and $ C_i = c_iE [D]$ respectively. Therefore, the aggregate cost function is 
\begin{equation*}
    \begin{aligned}
        TC(n,r,X_0,k_a,k_r) & = C_w + C_r + C_t + C_i \\
        & = (N-n)\left[c_r +  (w(\theta)-c_r)\left(\frac{P_a(c)[1-\{P_S(c)\}^m]}{1-P_S(c)}\right)\right] + c_tE[\tau] + c_iE[\mathcal{D}].
    \end{aligned}
\end{equation*}
Thus, our optimization problem in this case is formulated as
\begin{equation}\label{obj-2}
\begin{aligned}
\underset{(n,r,X_0,k_a,k_r)}{\text{Minimize}} \quad & TC(n,r,X_0,k_a,k_r)
\\[10pt]
\text{Subject to } \quad 
& \pi_A(c_{AQL}) = \frac{P_a(c_{AQL})\left[1-\{P_S(c_{AQL})\}^m\right]}{1-P_S(c_{AQL})} \geq 1-\alpha \\[6pt]
& \pi_A(c_{LQL}) = \frac{P_a(c_{LQL})\left[1-\{P_S(c_{LQL})\}^m\right]}{1-P_S(c_{LQL})} \leq \beta \\[6pt]
& n \geq 2, \quad r \geq 1, \quad k_r < k_a.
\end{aligned}
\end{equation}
However, it is quite difficult to solve the optimization problem formulated above in the Equation \eqref{obj-2} to determine the optimal plan, as it is a nonlinear mixed-integer programming problem. The difficulty mainly stems from the nonlinearity of the objective function, which consists of nonlinear terms and requires that the decision variables $n$ and $r$ be integers. To reduce the computational effort required to solve the problem, the normalized sample size $p_n=n/N$ is treated as a continuous decision variable. The integer sample size can be obtained through the floor function, $n=\lfloor p_nN\rfloor$. Likewise, instead of treating $r$ as a decision variable, the censoring proportion $q_r=1-r/n$ is considered, and the corresponding integer value of $r$ is obtained as $r=\lfloor(1-q)n\rfloor$. Since both $p_n$ and $q_r$ are continuous variables with $p_n,q_r\in(0,1)$, the original mixed-integer optimization problem is transformed into a nonlinear programming problem that can be efficiently solved using standard optimization techniques. To solve the optimization problem, we have used the \textit{nloptr} package in R version 4.5.3. Once the continuous optimum has been found, the integer equivalents of $n$ and $r$ can be obtained using the floor function. The entire optimization process is presented below as an algorithm \ref{algo-2}.
\begin{algorithm}[H]
\caption{Finding the optimal design}
\label{algo-2}
\KwIn{$\alpha,\beta,C_{AQL},C_{LQL},m,\eta,\lambda,N$}
\KwOut{$n^*,r^*,X_0^*,k_a^*,k_r^*,TC^*$}
Set the warranty period $t_w$ and all unit costs\\
Replace $n$ with $\lfloor p_nN\rfloor$ and $r$ with
$\lfloor(1-q_r)n\rfloor$ to transform the objective function
from $TC(n,r,X_0,k_a,k_r)$ to $TC(p_n,q_r,X_0,k_a,k_r)$\\
Solve the constrained optimization problem to obtain $(p_n^*,q_r^*, X_0^*,k_a^*,k_r^*,TC^*)$\\
Obtain $n^*=\lfloor p_n^*N\rfloor,~ r^*=\lfloor(1-q_r^*)n^*\rfloor$
to determine $(n^*,r^*)$\\
\end{algorithm}
\section{Analysis and comparison}\label{sec-5}
To evaluate the performance of the suggested acceptance sampling plan, a detailed numerical experiment is conducted under baseline parameter settings. The settings are selected to represent real-world industrial inspection problems and to enable meaningful comparison of various sampling plans.
The performance of the acceptance sampling plan is evaluated for four different producers' and consumers' risk settings: $(\alpha,\beta)=(0.01,0.01), (0.01,0.05), (0.05,0.01)$, and $(0.05,0.05)$. It should be noted that the risk values used in the above setting are typical in acceptance sampling applications and imply different levels of protection for both the producer and the consumer. The acceptable and limiting quality levels are set to $(C_{AQL}, C_{LQL})=(1.133,0.867), (1.125,0.875)$, and $(1.100,0.889)$ on the basis of the choice of $p_0=0.75,0.8$, and $0.90$ respectively. For example, the process characterized by $C_{py}=1.133$ is considered capable and ready to use for everyday production, while $C_{py}=0.867$ represents the minimum level of process capability. Thus, these quality levels define which lots are practically acceptable and which are unacceptable.

The lot size is fixed at $N=500$, denoting a moderate production lot size commonly observed in manufacturing setups. The lifetime of the product is modeled using the Chen distribution, with parameters $\eta=1.2$ and $\lambda=2.5$. In addition, the lower and upper specification limits are set to $L=0.1$ and $U=1$. The mathematical model accounts for inspection, testing, rejection, and warranty costs. Therefore, we obtain the expression of $TC$ considering the following unit costs: $c_a = 0.8$, $c_r = 1.4$, $c_t = 1.2$, and $c_i = 1.1$. For the given constant parameter values, the optimal plan parameters $(n^*,r^*, X_0^*, k_a, k_r)$ are determined by minimizing the model's total expected cost function $TC$, subject to constraints on producers' and consumers' risks. To facilitate the practical implementation of the proposed sampling plan, comprehensive design tables are generated for various combinations of $(C_{AQL}, C_{LQL}, \alpha, \beta)$ and presented for $m=2, 3$, and $4$. For instance, when $(C_{AQL}, C_{LQL}, \alpha, \beta)=(1.133, 0.867, 0.01, 0.05)$, the optimal design parameters are $(n^*, r^*, X_0^*, k_a, k_r)=(25, 24, 0.3014, 1.0094, 0.8692)$ for $m=2$ and $(24, 22, 0.4761, 1.0234, 0.8442)$ for $m=3$.

\begin{figure}[H]
    \centering
    \includegraphics[width=1\linewidth]{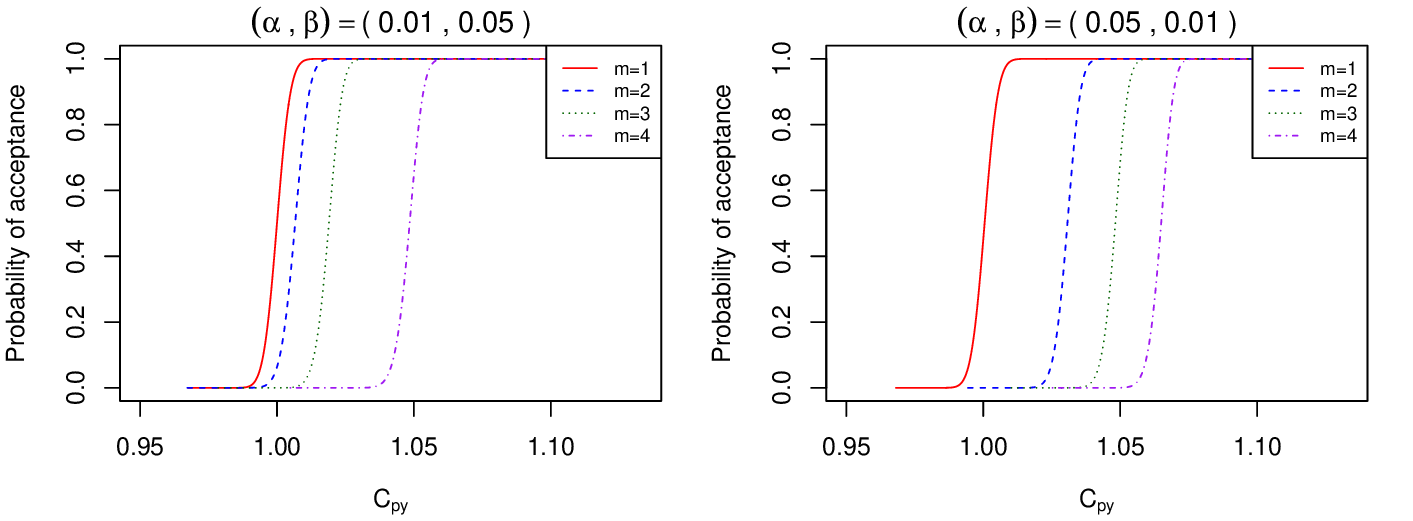}
    \caption{The OC curves for different resampling stages.}
    \label{oc-curves}
\end{figure}

The OC curve is an important measure for evaluating the discriminatory power of acceptance sampling plans. Figure \ref{oc-curves} compares the OC curves of the conventional variables SSP and the proposed resampling plan with $m=2, 3$, and $4$ under $(C_{AQL}, C_{LQL})=(1.133,0.867)$ for $(\alpha,\beta)=(0.01,0.05) $and $(\alpha,\beta)=(0.05,0.10)$. The results show that all plans satisfy the prescribed two-point conditions, thereby ensuring the desired protection of producers and consumers. Moreover, the proposed plans exhibit OC curves comparable to those of the SSP while achieving a lower expected total cost through the resampling strategy and warranty rebate mechanism, making them a more economical alternative.

The producers' risk $\alpha$ and consumers' risk $\beta$ significantly influence the discrimination capability of the sampling design used in acceptance sampling. The study findings indicate that as $\alpha$ and/or $\beta$ decreases, the required sample size and inspection costs increase, as more information will be needed to distinguish good from bad lots. It means that the expected inspection effort will increase as well. On the other hand, when higher risks are allowed, fewer samples will be sufficient for a better sampling process at a lower cost.

The effect of the quality levels $C_{AQL}$ and $C_{LQL}$ on the optimal design is also examined. The results show that, in most cases, the optimal sample size increases as the gap between $C_{AQL}$ and $C_{LQL}$ becomes smaller. This is because it becomes more difficult to distinguish between acceptable and unacceptable lots when the two quality levels are close, requiring more sample information to satisfy the specified risks. In contrast, a wider gap between $C_{AQL}$ and $C_{LQL}$ allows for easier lot classification and, therefore, smaller sample sizes. These findings suggest that overly narrow quality requirements can substantially increase the inspection effort and overall cost.

To evaluate the efficiency of the proposed sampling plan, its expected total cost is compared with that of the corresponding SSP under the same quality and risk requirements. The results show that both plans satisfy the required risk constraints, as shown in Figure \ref{cost-curves}; however, the proposed plan consistently achieves a lower expected total cost. This improvement is mainly due to the resampling strategy, which allows early acceptance or rejection of lots with clearly good or poor quality, thereby reducing unnecessary inspection effort. As a result, the proposed plan provides the same level of protection as the SSP while offering a more economical and efficient alternative for lot disposition. 

\begin{figure}[H]
    \centering
    \includegraphics[width=1\linewidth]{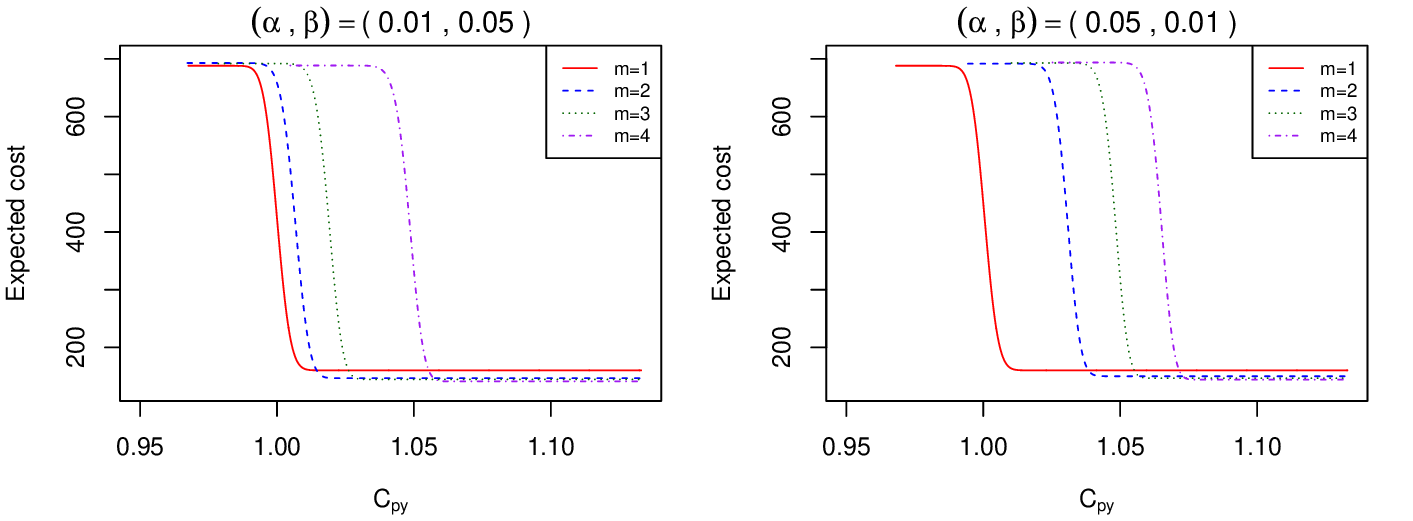}
    \caption{The expected cost for different resampling stages.}
    \label{cost-curves}
\end{figure}

Finally, to assess the effectiveness of the proposed model for small sample sizes while incorporating the producer's risk $(\alpha)$, an extensive Monte Carlo simulation study is performed. Because the lot acceptance criterion is established using the asymptotic distribution, the simulation analysis serves to examine whether the proposed sampling plan remains reliable when the sample size is limited. In determining the optimal designs, for different combinations of $C_{AQL},C_{LQL}$, $m$ with $(\alpha,\beta)=(0.01,0.05)$, the maximum likelihood estimates of $C_{py}$ are obtained for each simulated sample. The proportion of rejected lots is then compared with the nominal producer's risk $\alpha$. As shown in Table \ref{monte-carlo-est-alpha}, the estimated $\hat{\alpha}$ are satisfied with agreement with the specified risk levels (i.e. $\hat{\alpha}\leq\alpha$), indicating that the proposed methodology performs well even for relatively small sample sizes. The simulation procedure is in the Algorithm \ref{algo-3}.

\begin{table}[H]
\small
\centering
\caption{The optimal plan parameters for $m = 2$.}
\setlength{\tabcolsep}{12pt}
\resizebox{\textwidth}{!}{
\begin{tabular}{ccccccccccc}
\toprule
$p_0$ & $C_{AQL}$ & $C_{LQL}$ & $\alpha$ & $\beta$ & $n^*$ & $r^*$ & $X_0^*$ & $k_a$ & $k_r$ & $TC$\\
\midrule
0.75 & 1.133 & 0.867& 0.01 & 0.01 & 30 & 29 &0.4568& 1.0304 &0.9089& 125.89\\
&&&&                         0.05 & 25 &  24 &0.3014 &1.0094& 0.8692& 118.95\\
\cmidrule(l){4-11}
&&&0.05 & 0.01 & 29  & 28 &0.3497& 1.0333 &0.9100 &125.81 \\
& & &&         0.05 & 22 &  21& 0.2851& 1.0121 &0.8758& 118.81\\
\midrule
0.80 & 1.125& 0.875& 0.01 & 0.01 & 36   &32 &0.4912 &1.0366 &0.8122 &127.08\\
&&&&           0.05 & 33  & 26 &0.4775 &1.0192& 0.8436& 123.27\\
\cmidrule(l){4-11}
&&&0.05 & 0.01 & 30 &  28 &0.5728 &1.0337& 0.8705 &128.57\\
& &  &&        0.05 & 29  & 25 &0.5369& 1.0173 &0.8795& 120.06\\
\midrule
0.90 & 1.100& 0.889& 0.01 & 0.01 & 42 &  38 &0.2687 &1.0142 &0.9132 &132.08\\
&& &&          0.05 & 26  & 24 &0.4574 &1.0064 &0.8823 &121.53\\
\cmidrule(l){4-11}
&&&0.05 & 0.01 & 30  & 29 &0.4033 &1.0251& 0.9246 &128.58\\
& &  &&        0.05 & 25  & 23 &0.4776 &1.0073& 0.8968 &121.53\\
\bottomrule
\end{tabular}
}
\end{table}

\begin{table}[H]
\small
\centering
\caption{The optimal plan parameters for $m = 3$.}
\setlength{\tabcolsep}{12pt}
\resizebox{\textwidth}{!}{
\begin{tabular}{ccccccccccc}
\toprule
$p_0$ & $C_{AQL}$ & $C_{LQL}$ & $\alpha$ & $\beta$ & $n^*$ & $r^*$ & $X_0^*$ & $k_a$ & $k_r$ & $TC$\\
\midrule
0.75 & 1.133 & 0.867& 0.01 & 0.01 & 29 & 28 &0.5002 &1.0389 &0.8819 &122.79 \\
&&&&           0.05 & 24 & 22& 0.4761 &1.0234 &0.8442& 117.06\\
\cmidrule(l){4-11}
&&&0.05 & 0.01 & 25&   24 &0.4134 &1.0524& 0.8887& 120.63\\
& & &&         0.05 & 24  & 20& 0.3642 &1.0401 &0.8507 &116.81\\
\midrule
0.80 & 1.125& 0.875& 0.01 & 0.01 & 31 & 30 &0.5587 &1.0309& 0.8857& 124.36\\
&&&&           0.05 & 26 &  21 &0.4716 &1.0370& 0.8594& 117.33\\
\cmidrule(l){4-11}
& & & 0.05 & 0.01 & 29 &  27 &0.4717 &1.0541 &0.8917 &122.62\\
& &  &&        0.05 & 22 & 20 &0.5807 &1.0688 &0.8595 &115.83\\
\midrule
0.90 & 1.100& 0.889& 0.01 & 0.01 & 33 &  29 &0.3081& 1.0381 &0.9071 &124.92\\
&& &&          0.05 & 24  & 23& 0.3634 &1.0042 &0.8760 &120.69\\
\cmidrule(l){4-11}
&&&0.05 & 0.01 & 28  & 26 &0.4427 &1.0386 &0.9208 &123.56\\
& &  &&        0.05 & 22 &  20 &0.4365 &1.0250 &0.8956& 116.98\\
\bottomrule
\end{tabular}
}
\end{table}

\begin{table}[H]
\small
\centering
\caption{The optimal plan parameters for $m = 4$.}
\setlength{\tabcolsep}{12pt}
\resizebox{\textwidth}{!}{
\begin{tabular}{ccccccccccc}
\toprule
$p_0$ & $C_{AQL}$ & $C_{LQL}$ & $\alpha$ & $\beta$ & $n^*$ & $r^*$ & $X_0^*$ & $k_a$ & $k_r$ & $TC$\\
\midrule
0.75 & 1.133 & 0.867& 0.01 & 0.01 & 25  & 24 &0.3902 &1.0801 &0.8623 &117.99\\
&&&&           0.05 & 24 &  19 &0.4835 &1.0544& 0.8510& 115.25\\
\cmidrule(l){4-11}
&&&0.05 & 0.01 & 22 &  21 &0.4184 &1.0704 &0.8724 &117.66\\
& & &&         0.05 & 22  & 18 &0.3940 &1.0551 &0.8610 &114.67\\
\midrule
0.80 & 1.125& 0.875& 0.01 & 0.01 & 27 &   24 & 0.4301 & 1.0657 & 0.8624 & 119.43\\
&&&&           0.05 & 22  &  18 & 0.5550 & 1.0562 & 0.8397&  114.40\\
\cmidrule(l){4-11}
&&&0.05 & 0.01 & 26 & 24 &0.6361 &1.0507 &0.8715& 118.89\\
& &  &&        0.05 & 19 &   16 & 0.5626 & 1.0624&  0.8417 & 114.07\\
\midrule
0.90 & 1.100& 0.889& 0.01 & 0.01 & 30  & 26 &0.2137 &1.0561& 0.8717 &122.01\\
&& &&          0.05 & 19 &  17 &0.4161 &1.0445& 0.8601 &114.55\\
\cmidrule(l){4-11}
&&&0.05 & 0.01 & 26 &  24 &0.4339 &1.0510 &0.8936 &120.25\\
& &  &&        0.05 & 18 &  16 &0.4262 &1.0500& 0.8636& 113.11 \\
\bottomrule
\end{tabular}
}
\end{table}

\begin{table}[H]
    \small
    \centering
    \caption{Estimated $\hat{\alpha}$ under $(\alpha,\beta)=(0.01,0.05)$}
    \label{monte-carlo-est-alpha}
    \setlength{\tabcolsep}{12pt}
    \resizebox{\textwidth}{!}{
    \begin{tabular}{ccccccccccc}
        \toprule
        $m$ & $p_0$ & $C_{AQL}$ & $C_{LQL}$ & $n^*$ & $r^*$ & $X_0^*$ & $k_a$ & $k_r$ & $TC^*$ & $\hat{\alpha}$\\
        \midrule
        2 & 0.75 & 1.133 & 0.867 & 25 & 24 & 0.3014 & 1.0094 & 0.8692 & 118.95 & 0.0041\\
          & 0.80 & 1.125 & 0.875 & 33 & 26 & 0.4775 & 1.0192 & 0.8436 & 123.27 & 0.0033\\ 
          & 0.90 & 1.100 & 0.889 & 26 & 24 & 0.4574 & 1.0064 & 0.8823 & 121.53 & 0.0019\\
        3 & 0.75 & 1.133 & 0.867 & 24 & 22 & 0.4761 & 1.0234 & 0.8442 & 117.06 & 0.0023\\
          & 0.80 & 1.125 & 0.875 & 26 & 21 & 0.4716 & 1.0370 & 0.8594 & 117.33 & 0.0018\\ 
          & 0.90 & 1.100 & 0.889 & 24 & 23 & 0.3634 & 1.0042 & 0.8760 & 120.69 & 0.0010\\
        4 & 0.75 & 1.133 & 0.867 & 24 & 19 & 0.4835 & 1.0544 & 0.8510 & 115.25 & 0.0017\\
          & 0.80 & 1.125 & 0.875 & 22 & 18 & 0.5550 & 1.0562 & 0.8397 & 114.40 & 0.0013\\
          & 0.90 & 1.100 & 0.889 & 19 & 17 & 0.4161 & 1.0445 & 0.8601 & 114.55 & 0.0004\\
        \bottomrule
    \end{tabular}
    }
\end{table}
\begin{algorithm}[H]
\caption{Monte Carlo simulation for $\hat{\alpha}$}
\label{algo-3}
\KwIn{$\alpha,\beta,C_{AQL},C_{LQL},m,\lambda,n^*,r^*,X_0^*,k_a^*,k_r^*$}
\KwOut{$\hat{\alpha}$}
Determine $\eta_{\mathrm{AQL}}$ such that
$C_{py}(\eta_{\mathrm{AQL}},\lambda)=C_{\mathrm{AQL}}$\\
Initialize: count=0\\
\For{$i=1$ \KwTo $10,000$}{
    \For{$u=1$ \KwTo $m$}{
       Generate HC-II sample from $Chen(\eta_{\mathrm{AQL}},\lambda)$ using $n^*,r^*,X_0^*$\\
        Compute $\hat{C}_{py}=C_{py}(\hat{\eta},\hat{\lambda})$
        \If{$u<m$}{
            \uIf{$\hat{C}_{py}\ge k_a$}{
                count=count+1\\
                \textbf{break}
            }
            \uElseIf{$\hat{C}_{py}< k_r$}{
                \textbf{break}
            }
            \uElseIf{$k_r \leq \hat{C}_{py} < k_a$}{
            Continue to the next stage
            }
        }
        \Else{
            \uIf{$\hat{C}_{py}\ge k_a$}{
                count=count+1
            }
        }
    }
}
Compute $\hat{\alpha}=1-\frac{\text{count}}{10,000}$
\end{algorithm}
\section{Case study}\label{sec-6}
To assess the performance and practical applicability of the proposed optimal sampling plan, a real-world dataset has been analyzed. The following dataset from \citep{lawless2011statistical} is used to illustrate our sampling plan. The data consist of the recorded failure mileages (in thousands of miles) for a collection of locomotive controls obtained from a reliability study. We have divided the whole dataset by 100 for computational purposes and presented it in the Table \ref{real-data}. 
\begin{table}[H]
    \centering
    \caption{The failure times for the 37 failed units}
    \resizebox{\textwidth}{!}{
    \begin{tabular}{cccccccccccccc}
    \toprule
        0.225 & 0.375 & 0.460 & 0.485 & 0.515 & 0.530 & 0.545 & 0.575 & 0.665 & 0.680 & 0.695 & 0.765 & 0.770 & 0.785\\
        0.800 & 0.815 & 0.820 & 0.830 & 0.840 & 0.915 & 0.935 & 1.025 & 1.070 & 1.085 & 1.125 & 1.135 & 1.160 & 1.170\\
        1.185 & 1.190 & 1.200 & 1.225 & 1.230 & 1.275 & 1.310 & 1.325 & 1.340\\
        \bottomrule
    \end{tabular}
    }
    \label{real-data}
\end{table}
Furthermore, the Kolmogorov-Smirnov goodness-of-fit test was performed to determine whether our model is suitable for analyzing the given dataset. The computed test statistics, along with their $p$-values, are 0.13478 and 0.4717. This suggests that the Chen distribution provides a fit for considering the dataset. To better understand the data distribution and the theoretical vs. empirical CDFs, a P-P plot and a histogram of the data with fitted PDFs are presented in Figure \ref{fit} to demonstrate the fit to the dataset. Now, considering the data from the Chen distribution and accounting for censoring, the distribution parameters are estimated by maximum likelihood to be $\hat\eta=0.1323$ and $\hat\lambda=1.4663$, respectively.
\begin{figure}[H]
    \centering
    \includegraphics[width=1\linewidth]{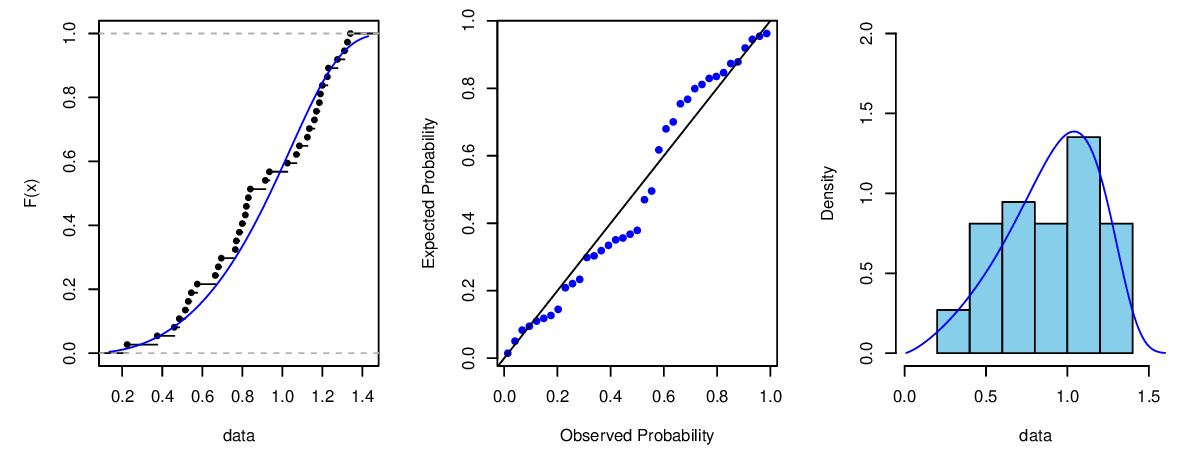}
    \caption{Empirical CDF with theoretical CDF, P-P, and histogram with fitted PDF plot.}
    \label{fit}
\end{figure}
For the numerical illustration, the specification limits are fixed at $(LSL, USL)=(0.1, 2)$. Suppose the acceptable and limiting quality levels are specified as $(c_{AQL}, c_{LQL})=(1.133,0.867)$, with the producer's and consumer's risks set to $(\alpha,\beta)=(0.01,0.05)$, respectively. The maximum allowable number of sampling stages is $m=3$. Under these design requirements, the lot acceptance probability must be at least $0.99$ when the submitted lot quality is $C_{py}=1.133$. In contrast, it must not exceed $0.05$ when the lot quality deteriorates to $C_{py}=0.867$. Assuming the true parameter values are $(\eta,\lambda)=(0.1323,1.4663)$, and solving the optimization problem in Equation \eqref{obj-2} yields the optimal plan parameters $(n,r,X_0,k_a,k_r)=(21,20,0.5059,1.0315,0.8477)$ with the corresponding minimum total cost is $TC=119.28$.

Based on the computed sampling plan, $(n, r, X_0) = (21, 20, 0.5059)$, we take a sample from the full data set in Table \ref{real-data} to mimic a lot inspection scenario as presented in the Table \ref{real-lot}. Then, based on the collected units, we have estimated the MLEs $\hat{\eta}$ and $\hat{\lambda}$. Figure \ref{profile-lambda} represents the profile log-likelihood of $\lambda$. Next, we have calculated the sample estimator as $\hat{C}_{py}=1.3203$, which exceeds the critical value for acceptance, $k_a=1.0315$. Hence, the entire submitted lot can be accepted. On the other hand, if $\hat{C}_{py} < 0.8477$, the lot should be rejected immediately, and if $\hat{C}_{py}$ falls within the resampling region $[0.8477,1.0315)$, a second stage sample is to be drawn from the submitted lot. The decision regarding the lot at this stage follows the same acceptance criterion applied during the first stage. If inspection proceeds to the third and final stage, the lot is accepted only if $\hat{C}_{py} \geq 1.0315$; otherwise, the lot is rejected.

\begin{table}[H]
    \centering
    \caption{Collected 28 failure times}
    \resizebox{\textwidth}{!}{
    \begin{tabular}{cccccccccccccc}
    \toprule
        0.225 & 0.375 & 0.460 & 0.485 & 0.515 & 0.530 & 0.545 & 0.665 & 0.695 & 0.765 & 0.770 & 0.815 & 0.820 & 0.830\\
        1.070 & 1.125 & 1.170 & 1.200 & 1.230 & 1.310\\
        \bottomrule
    \end{tabular}
    }
    \label{real-lot}
\end{table}
\begin{figure}[H]
    \centering
    \includegraphics[width=0.45\linewidth]{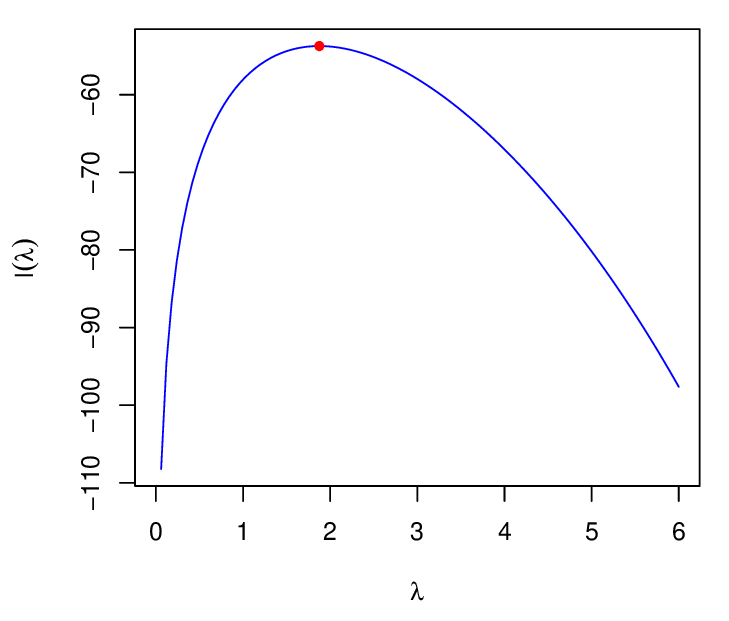}
    \caption{Profile log-likelihood plot of $\lambda$ for real data}
    \label{profile-lambda}
\end{figure}
\subsection{Sensitivity analysis}
The construction of the proposed SIMSP revealed that the parameters of the Chen distribution of $T$ play a crucial role in determining the optimal sampling design. Consequently, a comprehensive sensitivity analysis is conducted to assess the effect of parameter misspecification on the performance, robustness, and reliability of the proposed sampling plan. The above dataset, as presented in the Table \ref{real-data}, is used to conduct the sensitivity analysis. The estimates (respective standard errors) of $\eta$ and $\lambda$ are found to be $0.1323 (0.00068)$ and $1.4663 (0.04732)$, respectively. For sensitivity analysis, using the estimates and their respective standard errors, three sets of values (estimate, estimate + standard error, estimate - standard error) are computed for each parameter. Using three pairs of $(c_{AQL},c_{LQL})$ and $(\alpha,\beta)=(0.01,0.05)$, the optimal results corresponding to each of the nine sets of parameters are obtained with maximum allowable sampling stages $m=3$.

The results presented in Table \ref{sensitivity-1} prove the robustness of the SIMSP to moderate misspecifications of the Chen distribution parameters. Deviations of $\eta$ and $\lambda$ by one standard error from their estimated values result in minor changes in the design parameters and the associated total cost. Both the sample size and the acceptance number are stable for any possible combination of the parameters. Although the impact of $\lambda$ on the design seems to be somewhat stronger than the effect of $\eta$, the differences in terms of both cost and design parameters are quite minor. Thus, the analysis proves the applicability and robustness of the proposed sampling scheme.

The sensitivity of the SIMSP to the lot size $N$ is shown in Table \ref{sensitivity-2}. It can be seen that the total cost $TC$ of the process increases sharply as the lot size increases from $250$ to $1000$, which is quite natural, given the increased costs of inspection, testing, and replacement for larger lots. On the other hand, there is a gradual increase in the optimal sample size $n$ and optimal acceptance number $r$, implying that the effort required for the sampling process increases very slowly relative to the lot size. Furthermore, the value of the optimal termination time $X_0$ is fairly stable across the large-scale production environments. Furthermore, as warranty cost represents a significant component of the proposed cost function, its impact on the optimal design is examined. Figure \ref{opt-vs-plan} shows that the inclusion of warranty cost results in a decrease in the optimal censoring time and an increase in the optimal sample size. It is also evident that the warranty period substantially affects the optimal total cost, with longer warranty durations leading to higher optimal costs.
\begin{figure}[H]
    \centering
    \includegraphics[width=0.9\linewidth]{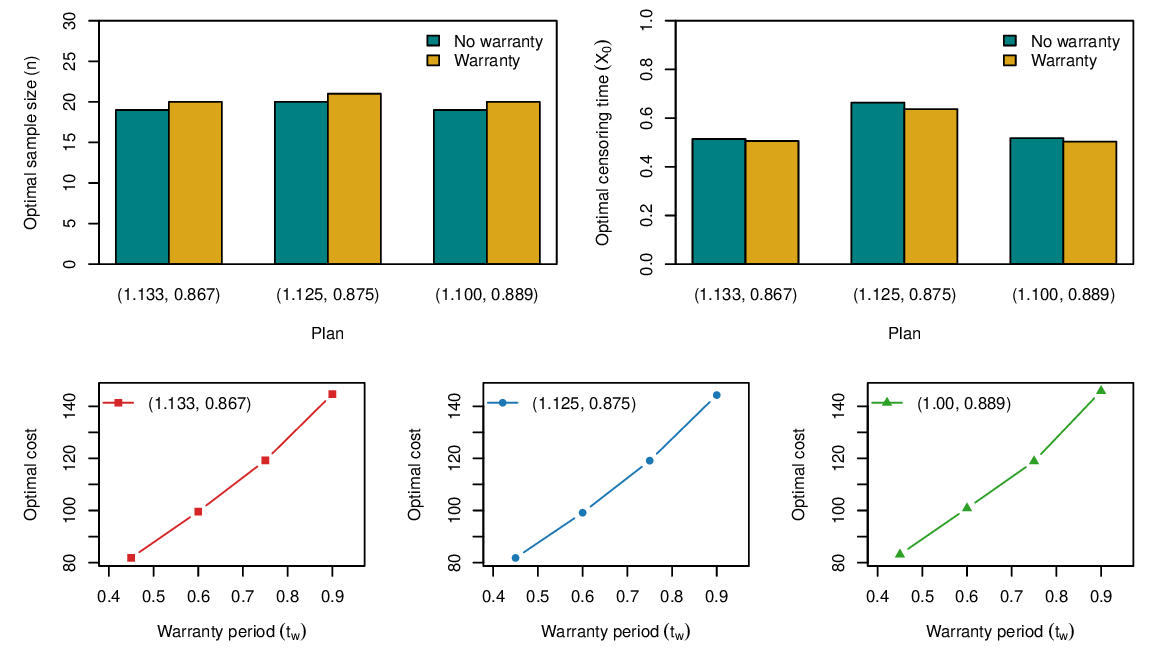}
    \caption{Influence of warranty cost and period on the optimal design.}
    \label{opt-vs-plan}
\end{figure}
\begin{table}[H]
    \scriptsize
    \centering
    \caption{Sensitivity analysis of failure data of locomotive controls.}
    \label{sensitivity-1}
    \setlength{\tabcolsep}{13pt}
    \resizebox{\textwidth}{!}{
    \begin{tabular}{ccccccccc}
        \toprule
        $P_0$ & $c_{AQL}$ & $c_{LQL}$ & $\hat{\eta}$ & $\hat{\lambda}$ & $n^*$ & $r^*$ & $X_0^*$ & $TC$\\
        \cmidrule{1-9}
        0.75 & 1.133 & 0.867 & 0.13162 & 1.41898 & 23 & 22 & 0.7218 & 123.29\\
        & & &                  0.13162 & 1.46630 & 22 & 21 & 0.5098 & 119.70\\
        & & &                  0.13162 & 1.51362 & 20 & 19 & 0.5599 & 115.65\\
        & & &                  0.13230 & 1.41898 & 23 & 22 & 0.7426 & 123.31\\
        & & &                  0.13230 & 1.46630 & 21 & 20 & 0.5059 & 119.22\\
        & & &                  0.13230 & 1.51362 & 18 & 17 & 0.5016 & 114.81\\
        & & &                  0.13298 & 1.41898 & 22 & 21 & 0.6087 & 122.95\\
        & & &                  0.13298 & 1.46630 & 23 & 22 & 0.5155 & 120.49\\
        & & &                  0.13298 & 1.51362 & 19 & 18 & 0.5014 & 115.18\\
        \cmidrule{1-9}
        0.80 & 1.125 & 0.875 & 0.13162 & 1.41898 & 24 & 23 & 0.6050 & 123.76\\
        & & &                  0.13162 & 1.46630 & 19 & 18 & 0.5596 & 119.00\\
        & & &                  0.13162 & 1.51362 & 17 & 16 & 0.5166 & 114.94\\
        & & &                  0.13230 & 1.41898 & 22 & 21 & 0.5727 & 123.07\\
        & & &                  0.13230 & 1.46630 & 20 & 19 & 0.6368 & 119.13\\
        & & &                  0.13230 & 1.51362 & 17 & 16 & 0.5051 & 114.91\\
        & & &                  0.13298 & 1.41898 & 21 & 20 & 0.6901 & 122.99\\
        & & &                  0.13298 & 1.46630 & 19 & 18 & 0.5021 & 118.82\\
        & & &                  0.13298 & 1.51362 & 21 & 20 & 0.5004 & 116.48\\
        \cmidrule{1-9}
        0.90 & 1.100 & 0.889 & 0.13162 & 1.41898 & 24 & 23 & 0.6654 & 125.56\\
        & & &                  0.13162 & 1.46630 & 21 & 20 & 0.5295 & 120.40\\
        & & &                  0.13162 & 1.51362 & 20 & 19 & 0.5105 & 116.95\\    
        & & &                  0.13230 & 1.41898 & 24 & 23 & 0.5999 & 125.65\\
        & & &                  0.13230 & 1.46630 & 20 & 19 & 0.5084 & 120.43\\
        & & &                  0.13230 & 1.51362 & 20 & 19 & 0.5140 & 116.81\\    
        & & &                  0.13298 & 1.41898 & 24 & 23 & 0.7088 & 125.69\\
        & & &                  0.13298 & 1.46630 & 20 & 19 & 0.5039 & 120.39\\
        & & &                  0.13298 & 1.51362 & 21 & 20 & 0.5473 & 117.19\\
        \bottomrule
    \end{tabular}
    }
\end{table}
\begin{table}[H]
    \scriptsize
    \centering
    \caption{Optimal design for different lot sizes with $(\eta,\lambda)=(0.1323,1.4663)$.}
    \label{sensitivity-2}
    \setlength{\tabcolsep}{18pt}
    \resizebox{\textwidth}{!}{
    \begin{tabular}{cccccccc}
        \toprule
        $P_0$ & $c_{AQL}$ & $c_{LQL}$ & $N$ & $n^*$ & $r^*$ & $X_0^*$ & $TC$\\
        \cmidrule{1-8}
        0.75 & 1.133 & 0.867 & 250 & 18 & 17 & 0.5483 & 68.38\\
        & & &                  500 & 21 & 20 & 0.5059 & 119.22\\
        & & &                  750 & 22 & 21 & 0.5199 & 169.08\\
        & & &                 1000 & 25 & 24 & 0.5980 & 219.70\\
        \cmidrule{1-8}
        0.80 & 1.125 & 0.875 & 250 & 18 & 17 & 0.5032 & 68.51\\
        & & &                  500 & 20 & 19 & 0.6368 & 119.13\\
        & & &                  750 & 22 & 21 & 0.5500 & 168.92\\
        & & &                 1000 & 23 & 22 & 0.6473 & 218.35\\
        \cmidrule{1-8}
        0.90 & 1.100 & 0.889 & 250 & 19 & 18 & 0.5980 & 69.06\\
        & & &                  500 & 19 & 18 & 0.5037 & 118.86\\
        & & &                  750 & 22 & 21 & 0.5385 & 169.62\\
        & & &                 1000 & 23 & 22 & 0.5114 & 219.04\\        
        \bottomrule
    \end{tabular}
    }
\end{table}
\section{Conclusion and Future Work}\label{sec-7}
In this paper, a stage-independent multiple sampling plan based on the generalized process capability index $C_{py}$ is proposed for lot sentencing under a Type-II HCS while incorporating a pro-rata warranty rebate policy. Unlike the conventional multiple sampling plan, the proposed plan assumes that the inspection decision at each sampling stage is independent of the outcomes from previous stages, which substantially simplifies the derivation of the OC function while preserving the economic advantages of multistage inspection. The plan parameters are obtained by solving a constrained optimization model that minimizes the expected total cost, while satisfying the specified producer and consumer risk requirements. Extensive simulation studies illustrate that the proposed SIMSP significantly reduces the average sample size and total expected cost compared to traditional one-stage sampling, while still providing sufficient guarantees for the producer's and consumer's protection levels. Finally, a real data application demonstrates the practical implementation of the proposed approach. The sensitivity analysis further helps to study the influence of parameters on optimal design. Due to the flexibility of the generalized process capability index, the suggested framework can be easily modified to suit different situations.

Several directions for future research remain, including extending the proposed framework to other censoring schemes, such as progressive hybrid and adaptive hybrid censoring, as well as Bayesian and bootstrap estimation methods. It can also be implemented for accelerated life testing, competing risks, and dependent lifetime distributions. Furthermore, incorporating more sophisticated warranty policies, such as combined free replacement/pro-rata warranties or renewable warranties, along with a variable cost structure, will make the developed sampling plan even more realistic.

 



\section*{Conflict of interest statement} There is no conflict of interest among the authors of this manuscript.
\section*{Data availability statement} The authors confirm that the data that support the findings of this study are available in the literature. 
\section*{Funding} This work of Yogesh Mani Tripathi is partially financially supported by a grant MTR/2022/000183 from the Science and Engineering Research Board, India. The research work of Tanmay Kayal is partially supported by the Indian Institute of Technology Mandi, India, with the SEED project grant (Project No. IITM/SG/TK/141).

\bibliographystyle{unsrt}
\bibliography{references}

\section*{Appendix A. Proof of the Theorem \ref{th-1}}\label{appendix-A}
By taking the derivative of Equation \eqref{log-lik} to $\eta$ and setting it to zero, then \eqref{eta-eq} can be obtained simply. Then, we will demonstrate that the log-likelihood function \eqref{log-lik} will reach its maximal value at $\hat{\eta}$.

Let, $t=\frac{\eta}{\hat{\eta}}$, then using $\ln{t}\leq t-1$, we have
\[ \ln{\eta} \leq -\eta\frac{\sum_{i=1}^d\left(1-e^{x_{i:n}^\lambda}\right) + (n-d)\left(1-e^{\tau_0^\lambda}\right)}{d} -1 + \ln{\hat{\eta}}.\]

Hence, the log-likelihood function can be obtained as
\[\ell(\eta,\lambda) \leq d\ln{\hat{\eta}} + d\ln{\lambda} + \lambda\sum_{i=1}^d\ln{x_{i:n}} + \sum_{i=1}^dx_{i:n}^\lambda -d.\]

Now, using the Equation \eqref{lam-eq} with replacing $d$, we obtain the log-likelihood function as
\begin{align*}
    \ell(\eta,\lambda) & \leq d\ln{\hat{\eta}} + d\ln{\lambda} + \lambda\sum_{i=1}^d\ln{x_{i:n}} + \sum_{i=1}^dx_{i:n}^\lambda + \hat{\eta}\sum_{i=1}^d\left(1-e^{x_{i:n}^\lambda}\right) + (n-d)\hat{\eta}\left(1-e^{\tau_{0}^\lambda}\right)\\
    & = \ell(\hat{\eta},\lambda).
\end{align*}
This completes the proof.
\section*{Appendix B. Proof of the Theorem \ref{th-2}}\label{appendix-B}
By taking the derivative of the profile log-likelihood function \eqref{lam-eq} with respect to $\lambda$ and setting it to zero, we can obtain the maximum likelihood estimate of $\lambda$ by solving $\psi(\lambda)=0$ where $\psi(\lambda)$ is given below. Now
the existence and uniqueness of the MLE of $\lambda$ are established. We have
\[\psi(\lambda)=\frac{d}{\lambda}+\sum_{i=1}^d\left(1+x_{i:n}^\lambda\right)\ln{x_{i:n}}-d\frac{\phi^{'}(\lambda)}{\phi(\lambda)},\]
where $$\phi^{'}(\lambda)=-\sum_{i=1}^de^{x_{i:n}^\lambda}x_{i:n}^\lambda\ln{x_{i:n}}-(n-d)e^{\tau_0^\lambda}\tau_0^\lambda\ln{\tau_0}$$ and
$$\phi(\lambda)=\sum_{i=1}^d\left(1-e^{x_{i:n}^\lambda}\right) + (n-d)\left(1-e^{\tau_0^\lambda}\right).$$
It can be easily seen that when $\lambda\to 0$, $\lim_{\lambda\to 0}\frac{d}{\lambda}\to+\infty$ and other terms are constant, so $\lim_{\lambda\to 0}\psi(\lambda)\to+\infty$. When $\lambda\to+\infty$, $\lim_{\lambda\to +\infty}\frac{d}{\lambda}\to0$, then $\psi(\lambda)$ becomes $$\psi_1(\lambda)=\sum_{i=1}^d\left(1+x_{i:n}^\lambda\right)\ln{x_{i:n}}-d\frac{\phi^{'}(\lambda)}{\phi(\lambda)}.$$ Then limit at $\lambda\to+\infty$ is evaluated using the following limits:
\begin{enumerate}[(i)]
    \item When $0<x_{i:n}<1$, $\lim_{\lambda\to+\infty} x_{i:n}^\lambda=0, \ln{x_{i:n}}<0$, and $\lim_{\lambda\to+\infty}e^{x_{i:n}^\lambda}=1$
    \item When $x_{i:n}=1$, $\lim_{\lambda\to+\infty} x_{i:n}^\lambda=1, \ln{x_{i:n}}=0$, and $\lim_{\lambda\to+\infty}e^{x_{i:n}^\lambda}=e$
    \item When $x_{i:n}>1$, $\lim_{\lambda\to+\infty} x_{i:n}^\lambda=+\infty, \ln{x_{i:n}}>0$, and $\lim_{\lambda\to+\infty}e^{x_{i:n}^\lambda}=+\infty.$
\end{enumerate}
Then we obtain $\lim_{\lambda\to+\infty}\psi_1(\lambda)=\text{constant}<0$, and consequently we have $\lim_{\lambda\to+\infty}\psi(\lambda)<0$. Also, since we have $\lim_{\lambda\to 0}\psi(\lambda)\to+\infty$. It is also noted that $\psi(\lambda)$ is a decreasing function of $\lambda$. Therefore, the Equation $\psi(\lambda)=0$ has a unique positive solution, and thus the existence and uniqueness of the MLE of $\lambda$ is demonstrated.  
\section*{Appendix C. Elements of the FIM}\label{appendix-C}
The Fisher information matrix $\mathcal{V}(\theta)$ is given by 
\begin{equation*}
        \mathcal{V}(\theta) =
    \begin{pmatrix}
        \mathcal{V}_{11} & \mathcal{V}_{12}\\
        \mathcal{V}_{21} & \mathcal{V}_{22}
    \end{pmatrix}
\end{equation*}
and the expressions of the elements are as follows
\begin{align*}
    \mathcal{V}_{11} & = \frac{1}{\eta^2}\left[n\int_0^{X_0}f(x)dx + \sum_{i=1}^ri\binom{n}{i}\int_{F(X_0)}^1 x^{i-1}(1-x)^{n-i}du \right]\\
    \mathcal{V}_{12} & = \frac{1}{\eta}\left[n\int_0^{X_0}H(x)f(x)dx + \sum_{i=1}^ri\binom{n}{i}\int_{F(X_0)}^1 G(x)x^{i-1}(1-x)^{n-i}du \right]\\
    \mathcal{V}_{22} & = n\int_0^{X_0}(H(x))^2f(x)dx + \sum_{i=1}^ri\binom{n}{i}\int_{F(X_0)}^1 (G(x))^2x^{i-1}(1-x)^{n-i}du 
\end{align*}
where 
$$H(x)=\frac{1}{\lambda}+(1+x^\lambda)\ln{x},$$
and $$G(x)=\frac{1}{\lambda}\left[1 + \left\{1 + \ln{\left(1-\frac{\ln{(1-x)}}{\eta}\right)} \right\}\ln\left(\ln{\left(1-\frac{\ln{(1-x)}}{\eta}\right)}\right)  \right].$$

\end{document}